%% file: main.tex
\documentclass[10pt,conference, letterpaper]{IEEEtran}
\IEEEoverridecommandlockouts

\usepackage{configuration}
\usepackage{booktabs}
\usepackage{amsthm}
\usepackage{amsmath}
\usepackage{amssymb}
\usepackage{booktabs}
\usepackage{enumitem}

\usepackage{lipsum}

\newtheorem{corollary}{Corollary}
\newtheorem{proposition}{Proposition}
\newtheorem{definition}{Definition}

\input{acro.cfg}

\begin{document}
\bstctlcite{IEEEexample:BSTcontrol}


\title{Understanding Conflict and Compatibility Conditions for Agents in AI-RAN}

\author{
  \IEEEauthorblockN{
    Arshia Zolghadr\IEEEauthorrefmark{2},
    Joao F. Santos\IEEEauthorrefmark{2},
    Imtiaz Nasim\IEEEauthorrefmark{3},
    Deniz Aytemiz\IEEEauthorrefmark{3},
    Nicholas J. Kaminski\IEEEauthorrefmark{3}, and
    Jacek Kibi\l{}da\IEEEauthorrefmark{2}
  }
  \IEEEauthorblockA{
    \IEEEauthorrefmark{2}\textit{Commonwealth Cyber Initiative},
    \textit{Virginia Tech}, USA,
    e-mail: \{arshiaz,joaosantos,jkibilda\}@vt.edu}
\IEEEauthorblockA{
    \IEEEauthorrefmark{3}\textit{Idaho National Laboratory}, USA,
    e-mail: \{imtiaz.nasim,deniz.aytemiz,nicholas.kaminski\}@inl.edu}
}
\maketitle

\begin{abstract}
Disaggregated mobile networks expose open interfaces for independent agents to control the \ac{RAN}. The interactions between agents with contrasting objectives can result in conflicts. However, existing conflict mitigation methods treat agents that control the same parameter as being in a conflict that requires arbitration of their actions, without assessing whether their objectives can be satisfied simultaneously. In this paper, we introduce a control-theoretic approach to determine when agents can coexist and share control of the same parameters. We model threshold-based agents often encountered in the literature as sampled deadband feedback controllers and represent their objectives through \ac{KPI} targets and tolerances that define acceptable deadband bounds. We derive compatibility conditions from these bounds and show that compatibility for any number of agents and monitored \acp{KPI} can be established through pairwise tests. We propose a conflict mitigation policy that allows compatible agents to share control of the same network parameters while restricting arbitration to cases where their objectives cannot be satisfied simultaneously. We evaluate our policy through simulations using NIST's \texttt{ns3-oran} extension of \texttt{ns-3}, and our results show that it enables agents to share control with near-zero \ac{SLA} drift when their objectives are compatible and the shared acceptable range is attainable, while matching Priority Mitigation when their objectives are incompatible.
\end{abstract}

\begin{IEEEkeywords}
AI-RAN, Conflict Mitigation, Control Theory
\end{IEEEkeywords}


\acresetall

\fancypagestyle{firstpage}
{
    \fancyhead[L]{This work has been submitted to the IEEE for possible
      publication.\\
      Copyright may be transferred without notice, after which this version may no longer be accessible.}
    \fancyhead[R]{}
}
\thispagestyle{firstpage}

\input{sections/01_introduction.tex}
\input{sections/02_related_works}
\input{sections/04_methods}
\input{sections/06_evaluation.tex}
\input{sections/07_conclusion.tex}

\section*{Acknowledgements}

Work supported through the INL Laboratory Directed Research \& Development
(LDRD) Program under DOE Idaho Operations Office Contract DE-AC07-05ID14517.
This research was also partially funded by the NSF IUCRC WISPER, under NSF Award
No. 2412872 and by the NSF US-Ireland R\&D Partnership program under NSF Award No. 2421362.

\bibliographystyle{IEEEtran}
\bibliography{bibliography}

\end{document}

%% file: sections/01_introduction.tex
\section{Introduction}\label{sec:intro}

Mobile networks are increasingly adopting AI-driven optimization of \ac{RAN} to achieve different objectives. This evolution is prominently reflected in the AI-RAN vision, which has gained substantial momentum through the establishment of a global consortium with broad participation from industry stakeholders~\cite{airan_alliance_whitepaper}. Such AI-driven optimization can be realized through a collection of third-party agents deployed on the network infrastructure, e.g., dApps, xApps, and rApps in the context of \ac{O-RAN}~\cite{santos2025managing}, that monitor \acp{KPI} and make decisions to modify \ac{RAN} parameters. However, agents can make conflicting decisions with the potential to degrade performance, cause instability, or disrupt network operation~\cite{del2025pacifista}. 

The existing literature on conflicts in mobile networks focuses on conflict detection and conflict mitigation as distinct problems. Conflict detection identifies interactions among agents, \ac{RAN} parameters, and \acp{KPI} either based on agent subscription information or RAN data analysis~\cite{del2026predicting,del2025pacifista,zolghadr2025learning,santos2026conflict}. Mitigation methods determine which actions take effect through different resolution mechanisms~\cite{adamczyk2023conflict,wadud2024qacm,giannopoulos2025comix}, assuming known or detected conflicts. While effective, this approach fails to recognize that not all detectable conflicts necessitate mitigation and that considering agents' objectives can improve performance. For instance, in the case of a direct conflict, defined by \ac{O-RAN} as two or more agents controlling the same \ac{RAN} parameter~\cite{oran_conflict_mitigation_2024}, existing works typically treat any detected conflict as warranting arbitration. However, consider a throughput-maximization agent and an energy-saving agent that both monitor downlink throughput and control the \ac{PRB} allocation~\cite{del2025pacifista,giannopoulos2025comix}. The throughput-maximization agent requests more \acp{PRB} when the measured throughput falls below its target, while the energy-saving agent requests fewer \acp{PRB} when the throughput is above its target. Although both control the same parameter and are in direct conflict over it, if the throughput performance stays within both of their targets, this can satisfy both objectives, making arbitration unnecessary. Motivated by this observation, we focus on designing a mitigation policy that accounts for agents' objectives, enabling the agents to share control over parameters.

To achieve this goal, we model a \ac{RAN} agent as a sampled deadband feedback controller model~\cite{khalil2002nonlinear, gelb1968multiple}, reflecting the behavior of  a typical \ac{RAN} agent that monitors a given \ac{KPI} and only takes action when its value falls outside its target objective~\cite{del2025pacifista}.  
Then, we show that a set of agents are \textit{compatible} if there exists at least one network state, as characterized by their monitored \acp{KPI}, in which the objectives of all agents are simultaneously satisfied.
On the contrary, a collection of agents is \emph{incompatible} when no such network state exists, so at least one objective must be violated. This distinction allows us to propose 
a conflict mitigation policy that arbitrates only between incompatible \ac{RAN} agents, thereby improving network performance and target performance compliance. 

The main contributions of this paper are as follows:
\begin{itemize}[leftmargin=1em]
    \item We provide the first control-theoretic formulation of AI-RAN agents and define their compatibility conditions. 
    \item 
    We show that the compatibility of an arbitrary number of agents over an arbitrary number of \acp{KPI} can be established through pairwise tests.
    \item We propose a compatibility-aware conflict mitigation policy that enables compatible agents to share control of a parameter, while resorting to prioritizing a single agent only when their objectives are incompatible.
    \item We quantify the impact of conflicts and compare state-of-the-art priority conflict mitigation policy~\cite{adamczyk2023conflict} against our proposed method in a representative AI-RAN scenario using NIST's \texttt{ns3-oran} extension of the \texttt{ns-3} simulator~\cite{garey2023oran}. 
\end{itemize}

%% file: sections/02_related_works.tex
\section{Related Works}\label{sec:related}
There is a growing literature exploring methods to
detect conflicts based on the interactions between agents, \ac{RAN} parameters, and \acp{KPI}.
For example, the work of~\cite{del2025pacifista} 
adopted statistical profiles of agents in a sandbox environment to infer interactions 
between agents, parameters, and \acp{KPI}. 
In our prior works, 
we proposed data-driven methods to identify interactions based on the correlation~\cite{zolghadr2025learning} and similarity~\cite{santos2026conflict} between parameters and \acp{KPI}.
While differing in their methods to identify interactions, existing works classify agents that control or influence the same parameter as being in conflict and require mitigation. This stringent approach overlooks cases in which multiple agents have compatible objectives and can share control or influence over a parameter without negatively affecting each other. In other words, existing conflict detection methods identify which agents interact on the same parameters, but do not determine whether mitigation is actually necessary.

Another group of works explores methods to mitigate conflicts after they are detected.
For example, distillation-based methods explore knowledge distillation to merge multiple \ac{DRL}-based agents into a single aggregate agent trained to prevent conflicts~\cite{erdol2026xapp}. However, this approach forfeits the modularity of independent agents, requires retraining, and can only be applied to \ac{DRL}-based agents. 
Alternatively, prioritization-based methods (also referred to as rule-based in some papers) rank agents according to an arbitrary priority and only enact the actions of the highest-ranked agent~\cite{adamczyk2023conflict, wadud2024qacm, wadud2025xapp, giannopoulos2025comix}.
While these works adopt different prioritization strategies, they rely on the same mechanism: the actions of one agent take precedence, and others' are suppressed, either permanently or at each decision epoch. This suppression is applied regardless of whether the other agents' actions would conflict with the prioritized agent or contribute to achieving the same (or a similar) objective.

The existing literature focuses on preventing potential effects arising from multiple agents interacting over the same parameters, without assessing the compatibility of their objectives. 
As a result, existing conflict mitigation methods can unnecessarily restrict the number of agents that can simultaneously control the network infrastructure, potentially curbing the achievable performance.
To address this limitation, we examine the behavior of agents operating on the network infrastructure and leverage control-theoretic concepts to derive the conditions under which multiple agents can coexist and share control of the same parameters.

%% file: sections/04_methods.tex
\section{System Model}\label{sec:methods}
In the following, we present the \ac{RAN} model and then define the agents as sampled deadband feedback controllers. 
We focus on the \ac{Near-RT RIC} because it is the \ac{RAN} control domain in which xApps use E2 measurements and actions for near-real-time control and optimization of RAN elements and resources, and RAN conflict mitigation is defined for multi-xApp operation~\cite{oran_wg3_ricarch,oran_conflict_mitigation_2024}. However, the framework can potentially be extended to other control domains of AI-RAN.

\subsection{RAN Model}\label{subsec:ran_model}

We consider a set of agents $\mathcal{A}=\{1,\ldots,A\}$ operating in the \ac{Near-RT RIC}.
Let $\mathcal{P}$ denote the set of \ac{RAN} parameters exposed to the agents, e.g., \ac{PRB} allocation, transmit power, and handover threshold, and let $\mathcal{K}$ denote the set of \acp{KPI} available from the \ac{RAN}.
Each agent $a\in\mathcal{A}$ controls a subset $\mathcal{P}_a\subseteq\mathcal{P}$ and monitors a subset $\mathcal{K}_a\subseteq\mathcal{K}$.

Following the \ac{O-RAN} definition of a direct conflict~\cite{oran_conflict_mitigation_2024}, for each $p\in\mathcal{P}$ we define the set of agents that control it as
\begin{equation}\label{eq:parameter_group}
    \mathcal{A}_p=\{a\in\mathcal{A}:p\in\mathcal{P}_a\}.
\end{equation}
Thus, there is a conflict over parameter $p$ when $|\mathcal{A}_p|\geq2$. For such a parameter, let
\begin{equation}\label{eq:parameter_kpi_set}
    \mathcal{K}_p=\bigcup_{a\in\mathcal{A}_p}\mathcal{K}_a
\end{equation}
denote the \acp{KPI} monitored by at least one of the agents controlling $p$.

For each $k\in\mathcal{K}_p$, let $y_k(t)$ denote the reported value of \ac{KPI} $k$ at time $t$. We collect the \acp{KPI} relevant to parameter $p$ in the vector
\begin{equation}\label{eq:kpi_vector}
    \mathbf{y}(t)=\bigl(y_k(t)\bigr)_{k\in\mathcal{K}_p}.
\end{equation}

\subsection{Modeling the Operation of RAN Agents}\label{subsec:threshold}

Agents monitor \acp{KPI} and act only when a monitored \ac{KPI} leaves a range of values they consider satisfactory.
For example, a throughput maximization agent only requests additional resources when the throughput falls below its intended threshold~\cite{del2025pacifista}.
This behavior corresponds to a sampled deadband feedback controller:
the agent remains inactive while its monitored \ac{KPI} lies within a range and acts to steer it back when it leaves that range~\cite{khalil2002nonlinear,gelb1968multiple}. Leveraging the language of control theory, we define each RAN agent through the following characteristics.

For every \ac{KPI} $k\in\mathcal{K}_a$ monitored by agent $a$, let $r_{a,k}$ denote the agent's target for $k$ and $\delta_{a,k}>0$ the agent's tolerance. At a time instant $t$, the control error is then
\begin{equation}\label{eq:deadband_error}
    e_{a,k}(t)=y_k(t)-r_{a,k}.
\end{equation}
We say that the \ac{KPI} $k$ is within the agent's deadband when
\begin{equation}\label{eq:deadband_condition}
    |e_{a,k}(t)|\leq\delta_{a,k},
\end{equation}
with lower and upper bounds
\begin{equation}\label{eq:deadband_interval}
    l_{a,k}=r_{a,k}-\delta_{a,k},\qquad
    h_{a,k}=r_{a,k}+\delta_{a,k}.
\end{equation}

\section{Compatibility Analysis}\label{sec:conditions}

In this section, we define objective compatibility for the agents controlling a common \ac{RAN} parameter and establish how this compatibility extends from pairs of agents to the full set.

Objective compatibility concerns whether the agents' deadbands can be satisfied simultaneously. Let
\begin{equation}\label{eq:kpi_agents}
    \mathcal{A}_{p,k}
    =
    \{a\in\mathcal{A}_p : k\in\mathcal{K}_a\}
\end{equation}
denote the set of agents with a direct conflict over parameter $p$ that monitor \ac{KPI} $k$. For each $k\in\mathcal{K}_p$, define
\begin{equation}\label{eq:aggregate_bounds}
    L_k
    =
    \max_{a\in\mathcal{A}_{p,k}} l_{a,k},
    \qquad
    H_k
    =
    \min_{a\in\mathcal{A}_{p,k}} h_{a,k},
\end{equation}
as the maximum lower bound and minimum upper bound on \ac{KPI} $k$ across the conflicting agents' deadbands.

\begin{definition}[Objective Compatibility]\label{def:compatibility}
The agents in $\mathcal{A}_p$ are said to be \emph{compatible} if
\begin{equation}\label{eq:compat_condition}
    L_k\leq H_k,
    \qquad
    \forall k\in\mathcal{K}_p.
\end{equation}
For each $k \in \mathcal{K}_p$, the interval $[L_k,H_k]$ is called the \emph{shared acceptable range} for \ac{KPI} $k$. If~\eqref{eq:compat_condition} fails for any $k\in\mathcal{K}_p$, the agents are \emph{incompatible}.
\end{definition}

The shared acceptable range contains the KPI $k$ values that fall within the deadband of every agent in $\mathcal{A}_{p,k}$.
We next show that for a set of agents, it suffices to perform pairwise compatibility tests, which will be useful for implementing an efficient conflict mitigation mechanism.

\begin{proposition}[Pairwise Sufficiency]\label{prop:pairwise}
The agents in $\mathcal{A}_p$ are compatible if and only if, for every pair of distinct agents $i,j\in\mathcal{A}_p$ and every \ac{KPI} $k\in\mathcal{K}_i\cap\mathcal{K}_j$,
\begin{equation}\label{eq:pairwise_condition}
    \max(l_{i,k},l_{j,k})
    \leq
    \min(h_{i,k},h_{j,k}).
\end{equation}
\end{proposition}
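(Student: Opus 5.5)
The plan is to reduce the statement to a per-KPI fact about intervals on the real line and then use the elementary Helly-type property of intervals in dimension one. Definition~\ref{def:compatibility} imposes $L_k\le H_k$ separately for each $k\in\mathcal{K}_p$, and the pairwise condition~\eqref{eq:pairwise_condition} likewise only couples bounds on the same KPI. So it suffices to fix an arbitrary $k\in\mathcal{K}_p$ and show that $L_k\le H_k$ holds if and only if~\eqref{eq:pairwise_condition} holds for every pair of distinct $i,j\in\mathcal{A}_{p,k}$. Here I use that a pair $i,j\in\mathcal{A}_p$ with $k\in\mathcal{K}_i\cap\mathcal{K}_j$ is exactly a pair in $\mathcal{A}_{p,k}$ by~\eqref{eq:kpi_agents}. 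Taking the conjunction over $k$ then gives the proposition.

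For the forward direction, I fix $k$ and distinct $i,j\in\mathcal{A}_{p,k}$. By~\eqref{eq:aggregate_bounds}, $\max(l_{i,k},l_{j,k})\le L_k$ and $H_k\le\min(h_{i,k},h_{j,k})$. Combining these with $L_k\le H_k$ yields~\eqref{eq:pairwise_condition}.

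For the converse, I fix $k$. Because $\mathcal{A}_{p,k}$ is finite, the max and min in~\eqref{eq:aggregate_bounds} are attained: choose $i^\star$ with $l_{i^\star,k}=L_k$ and $j^\star$ with $h_{j^\star,k}=H_k$. There are two cases.
\begin{itemize}
\item If $i^\star\neq j^\star$, then applying~\eqref{eq:pairwise_condition} to the pair $(i^\star,j^\star)$ gives
\[
L_k=l_{i^\star,k}\le\max(l_{i^\star,k},l_{j^\star,k})\le\min(h_{i^\star,k},h_{j^\star,k})\le h_{j^\star,k}=H_k .
\]
\item If $i^\star=j^\star$ (which covers the case $|\mathcal{A}_{p,k}|=1$, where no pair exists), then $L_k=r_{i^\star,k}-\delta_{i^\star,k}<r_{i^\star,k}+\delta_{i^\star,k}=H_k$, since $\delta_{i^\star,k}>0$.
\end{itemize}
The only technical care is in the converse, specifically this coincident-index or singleton case. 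It is handled by the deadband structure~\eqref{eq:deadband_interval} with positive tolerance, which makes every individual interval nonempty. Otherwise the argument is the standard fact that a finite family of real intervals has nonempty common intersection exactly when every two of them intersect.
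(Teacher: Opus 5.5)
Your proof is correct and follows the paper's argument essentially step for step: necessity uses the same chain $\max(l_{i,k},l_{j,k})\le L_k\le H_k\le\min(h_{i,k},h_{j,k})$. Sufficiency uses the same choice of indices attaining $L_k$ and $H_k$, with the split into $i=j$ (handled by $l_{i,k}\le h_{i,k}$) and $i\neq j$ (handled by the pairwise condition). One detail to add: attainment also requires $\mathcal{A}_{p,k}\neq\emptyset$, which holds because every $k\in\mathcal{K}_p$ is monitored by at least one agent in $\mathcal{A}_p$.
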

There are $\binom{|\mathcal{A}_p|}{2}$ agent pairs to evaluate, with each pair $i,j$ requiring $|\mathcal{K}_i\cap\mathcal{K}_j|$ \ac{KPI} checks.

\begin{IEEEproof}
Suppose first that the agents in $\mathcal{A}_p$ are compatible. For any distinct $i,j\in\mathcal{A}_p$ and any $k\in\mathcal{K}_i\cap\mathcal{K}_j$, Definition~\ref{def:compatibility} gives $L_k\leq H_k$. Since $i,j\in\mathcal{A}_{p,k}$,
\begin{equation}   \max(l_{i,k},l_{j,k})
    \leq L_k
    \leq H_k
    \leq \min(h_{i,k},h_{j,k}),
\end{equation}
and therefore~\eqref{eq:pairwise_condition} holds.

Then, suppose~\eqref{eq:pairwise_condition} holds for every pair and fix any $k\in\mathcal{K}_p$. Since $\mathcal{A}_{p,k}$ is finite and non-empty, there exist $i,j\in\mathcal{A}_{p,k}$ such that
\begin{equation}
    L_k=l_{i,k},
    \qquad
    H_k=h_{j,k}.
\end{equation}

If $i=j$, then $L_k\leq H_k$ because $l_{i,k}\leq h_{i,k}$. If $i\neq j$,~\eqref{eq:pairwise_condition} implies $l_{i,k}\leq h_{j,k}$, and therefore $L_k\leq H_k$. Since this holds for every $k\in\mathcal{K}_p$, the agents in $\mathcal{A}_p$ are compatible by Definition~\ref{def:compatibility}.
\end{IEEEproof}

The above result is the one-dimensional case of Helly's theorem~\cite{helly1923mengen}. 
Applying this independently to each $k\in\mathcal{K}_p$ gives the multi-\ac{KPI} result.

\begin{corollary}[Two-Agent Compatibility]\label{cor:two_agent}
Consider two agents monitoring the same \ac{KPI}, with targets $r_1$ and $r_2$ and tolerances $\delta_1$ and $\delta_2$. Let
\begin{equation}
    \Delta r = |r_1-r_2|
\end{equation}
denote their target separation. The agents are compatible if and only if
\begin{equation}\label{eq:two_agent_condition}
    \Delta r \leq \delta_1+\delta_2.
\end{equation}
For equal tolerances $\delta_1=\delta_2=\delta$, the width of the shared acceptable range is
\begin{equation}\label{eq:shared_width}
    W=\max\bigl(0,\,2\delta-\Delta r\bigr).
\end{equation}
Hence, the compatibility boundary is $\Delta r^\ast=2\delta$, at which the shared acceptable range collapses to a single \ac{KPI} value.
\end{corollary}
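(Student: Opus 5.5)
We specialize Proposition~\ref{prop:pairwise} (equivalently, Definition~\ref{def:compatibility}) to $\mathcal{A}_p=\{1,2\}$ and a single common \ac{KPI}, and then compute the resulting interval explicitly. With one \ac{KPI} $k$, compatibility reduces to the single inequality $\max(l_1,l_2)\leq\min(h_1,h_2)$, where $l_i=r_i-\delta_i$ and $h_i=r_i+\delta_i$ by~\eqref{eq:deadband_interval}. Since $l_i\leq h_i$ holds trivially because $\delta_i>0$, the max--min inequality is equivalent to the two cross inequalities $l_1\leq h_2$ and $l_2\leq h_1$. These read $r_1-r_2\leq\delta_1+\delta_2$ and $r_2-r_1\leq\delta_1+\delta_2$, which together are exactly $|r_1-r_2|\leq\delta_1+\delta_2$. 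This gives~\eqref{eq:two_agent_condition} in both directions.

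For the width, set $\delta_1=\delta_2=\delta$ and assume without loss of generality that $r_1\leq r_2$, which is harmless by symmetry of the roles of the agents. Then $L=\max(r_1-\delta,r_2-\delta)=r_2-\delta$ and $H=\min(r_1+\delta,r_2+\delta)=r_1+\delta$. Hence $H-L=2\delta-\Delta r$. The shared acceptable range $[L,H]$ is nonempty exactly when this quantity is nonnegative, so its length is $\max(0,2\delta-\Delta r)$, with the convention that an empty range has width zero. This convention is the only point needing care, and I would state it explicitly.

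Finally, the boundary case $\Delta r^\ast=2\delta$ gives $L=H=r_2-\delta=r_1+\delta$, i.e.\ the midpoint $(r_1+r_2)/2$. So the range degenerates to a single \ac{KPI} value, while any larger separation violates~\eqref{eq:two_agent_condition}. No step is a real obstacle; the argument is a direct computation once the max--min condition is split into the two cross inequalities.
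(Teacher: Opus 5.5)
Your proposal is correct and follows the route the paper intends: the paper gives no separate proof, presenting the corollary as a direct specialization of Definition~\ref{def:compatibility} (via Proposition~\ref{prop:pairwise}) to two agents and one shared \ac{KPI}. Your steps are exactly the ones left implicit there: reducing $\max(l_1,l_2)\leq\min(h_1,h_2)$ to the two cross inequalities, computing $H-L=2\delta-\Delta r$, and stating the convention that an empty range has width zero.
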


Based on Definition~\ref{def:compatibility}, we propose a \emph{compatibility-aware conflict mitigation policy}. For each set $\mathcal{A}_p$ of agents in a direct conflict, the policy evaluates~\eqref{eq:pairwise_condition} for every pair $i,j\in\mathcal{A}_p$ and every \ac{KPI} $k\in\mathcal{K}_i\cap\mathcal{K}_j$. If the condition holds for every such pair and \ac{KPI}, the agents are compatible and are allowed to share control of $p$ without arbitration. If it fails for any pair on any shared \ac{KPI}, the agents are incompatible, and the policy prioritizes the higher-priority agent following~\cite{adamczyk2023conflict}.

%% file: sections/06_evaluation.tex
\section{Numerical Evaluation}\label{sec:evaluation}

In this section, we evaluate our compatibility-aware conflict mitigation policy through simulations of a direct conflict over the downlink \ac{PRB} allocation. We use the canonical two-agent, single-KPI case to isolate the effect of compatibility on arbitration and runtime behavior. We first describe the simulation and conflict scenario, then the metric and baselines, and finally the numerical results.

\subsection{Simulation Setup and Conflict Scenario}\label{subsec:setup}
To assess the compatibility of agents in AI-RAN and evaluate our conflict mitigation policy, we use NIST's \texttt{ns3-oran} module~\cite{garey2023oran}, an extension of \texttt{ns-3} that models O-RAN-like behavior and enables agents to monitor \acp{KPI} and control parameters of an LTE network.
Our simulations consider a single-cell topology to isolate the effects of direct conflicts from those of inter-cell interference, handovers, and inter-cell coordination.
We place stationary \acp{UE} according to the 3GPP TR~38.901~\cite{3gpp38901} UMa deployment model, so changes in the reported throughput result from changes to the \ac{PRB} allocation rather than mobility.
We use constant-rate UDP downlink traffic so throughput changes reflect \ac{PRB} control and cell capacity rather than transport-layer rate adaptation.
Each run lasts \unit[15]{s} with the \unit[2]{s} used for initialization, and \acp{KPI} are evaluated over the remaining \unit[13]{s}. 
We report median results across 100 independent runs with distinct random seeds, with shaded bands showing the interquartile range.
We summarize our simulation parameters in Table~\ref{tab:sim_setup}.

\begin{table}[t]
    \centering
    \renewcommand{\arraystretch}{1.12}
    \footnotesize
    \begin{tabular}{@{}p{0.25\linewidth}p{0.64\linewidth}@{}}
        \toprule
        \textbf{Component} & \textbf{Configuration} \\
        \midrule
        Simulator & \texttt{ns-3} with NIST's \texttt{ns3-oran} module~\cite{garey2023oran} \\
        Topology & Single-cell LTE/EPC; 1 eNB; stationary UEs uniformly placed in $[10,289]$~m \\
        Channel model & 3GPP TR~38.901 UMa path loss, probabilistic LOS/NLOS, log-normal shadowing~\cite{3gpp38901} \\
        Carrier frequency & 2110~MHz (LTE FDD band 1) \\
        Transmit power & eNB \unit[46]{dBm}~\cite{3gpp36104}; UE \unit[23]{dBm}~\cite{3gpp36101} \\
        MAC scheduler & Round-robin \\
        RRC & Ideal RRC model \\
        PRB control & Configured DL PRB allocation available to the scheduler, selected from $\{6, 15, 25, 50, 75, 100\}$~\cite{3gpp36104}; initial allocation of 50~PRBs \\
        Traffic & Downlink constant-rate UDP; 1400-byte packets \\
        Agents & TM and ES; $T_{\mathrm{TM}}\!=\!T_{\mathrm{ES}}\!=\!100$~ms; both monitor DL throughput and control the DL PRB allocation \\
        Simulation duration & \unit[15]{s} per run; \unit[2]{s} initialization; 100 runs per configuration \\
        \bottomrule
    \end{tabular}
    \vspace{-0.5em}
    \caption{Simulation parameters.}\label{tab:sim_setup}
    \vspace{-2.5em}
\end{table}

We consider a network infrastructure hosting two agents with distinct objectives commonly adopted in the conflict detection literature~\cite{giannopoulos2025comix,del2025pacifista}: a Throughput Maximization (TM) agent and an Energy Saving (ES) agent, both monitoring the aggregate downlink throughput \ac{KPI}, and controlling the downlink \ac{PRB} allocation parameter. 
The TM increases the \ac{PRB} allocation when the throughput falls below its acceptable range, whereas the ES reduces the allocation when its objective can be met with fewer radio resources~\cite{giannopoulos2025comix}. 
Both agents monitor throughput and take actions every \unit[100]{ms}, 
 i.e., $T_{TM}=T_{ES}=100$~ms, consistent with the near-real-time control loop from O-RAN~\cite{oran_wg3_ricarch}. 
Since the throughput delivered by our simulated cell depends on both the \ac{PRB} allocation and the traffic demand, we calibrate the agents' target based on the achievable throughput under each evaluated \ac{PRB} allocation to ensure they correspond to feasible network states.

\subsection{Evaluation Metric and Baselines}\label{subsec:metrics}

To quantify the performance impact 
of multiple agents sharing control of a parameter, we adopt the following metric:

\textbf{\ac{SLA} Drift:} 
We quantify the violation of each agent's objective using \ac{SLA} drift, adapting the normalized SLA-drift metric in~\cite{campos2025dreamin} to the two-sided deadband objectives considered in this work. For agent $a$, let $y(t_i)$ denote the reported throughput at sampling instant $t_i$, and let $l_a=r_a-\delta_a$ and $h_a=r_a+\delta_a$ denote its lower and upper deadband bounds. We define the instantaneous \ac{SLA} drift as
\begin{equation}\label{eq:drift}
    d_a(t_i)
    =
    \max\left(
    0,\,
    \frac{l_a-y(t_i)}{l_a},\,
    \frac{y(t_i)-h_a}{h_a}
    \right).
\end{equation}

Thus, $d_a(t_i)=0$ when the reported throughput lies within the agent's deadband. When the lower or upper bound is violated, $d_a(t_i)$ measures the relative distance to the corresponding deadband boundary. This extends the one-sided normalized SLA violation in~\cite{campos2025dreamin} to objectives with both lower and upper acceptable bounds.

We average the drift over the sampling instants in the evaluation interval:
\begin{equation}\label{eq:drift_pct}
    \mathrm{Drift}_a
    =
    \frac{100}{N_{\mathrm{eval}}}
    \sum_{i=1}^{N_{\mathrm{eval}}}
    d_a(t_i),
\end{equation}
where $N_{\mathrm{eval}}$ is the number of samples collected during the evaluation interval. The factor $100$ expresses the result as a percentage. All evaluated throughput bounds satisfy $l_a>0$, so the normalization in~\eqref{eq:drift} is well-defined.
To assess the impact of sharing control of a parameter on both the TM and ES agents, 
we report our results in terms of the \emph{combined SLA drift}, as an average of $\mathrm{Drift}_{\mathrm{TM}}$ and $\mathrm{Drift}_{\mathrm{ES}}$.

We compare our policy against two baselines representing its two operating choices: 
\textit{(i)} Ad Hoc Shared Control, an unarbitrated baseline in which every request is applied upon arrival, so the most recent request determines the \ac{PRB} allocation; and 
\textit{(ii)} Priority Mitigation, which assigns priorities to the agents. Following~\cite{adamczyk2023conflict}, decisions from the prioritized agent remain effective when conflicting requests target the same parameter, limiting lower-priority actions. Our policy permits shared control only after establishing compatibility; otherwise, it invokes Priority Mitigation. 
Without loss of generality, the ES has a higher priority than the TM in our simulations. 

\begin{figure}[t]
    \centering
    \includegraphics[width=\linewidth]{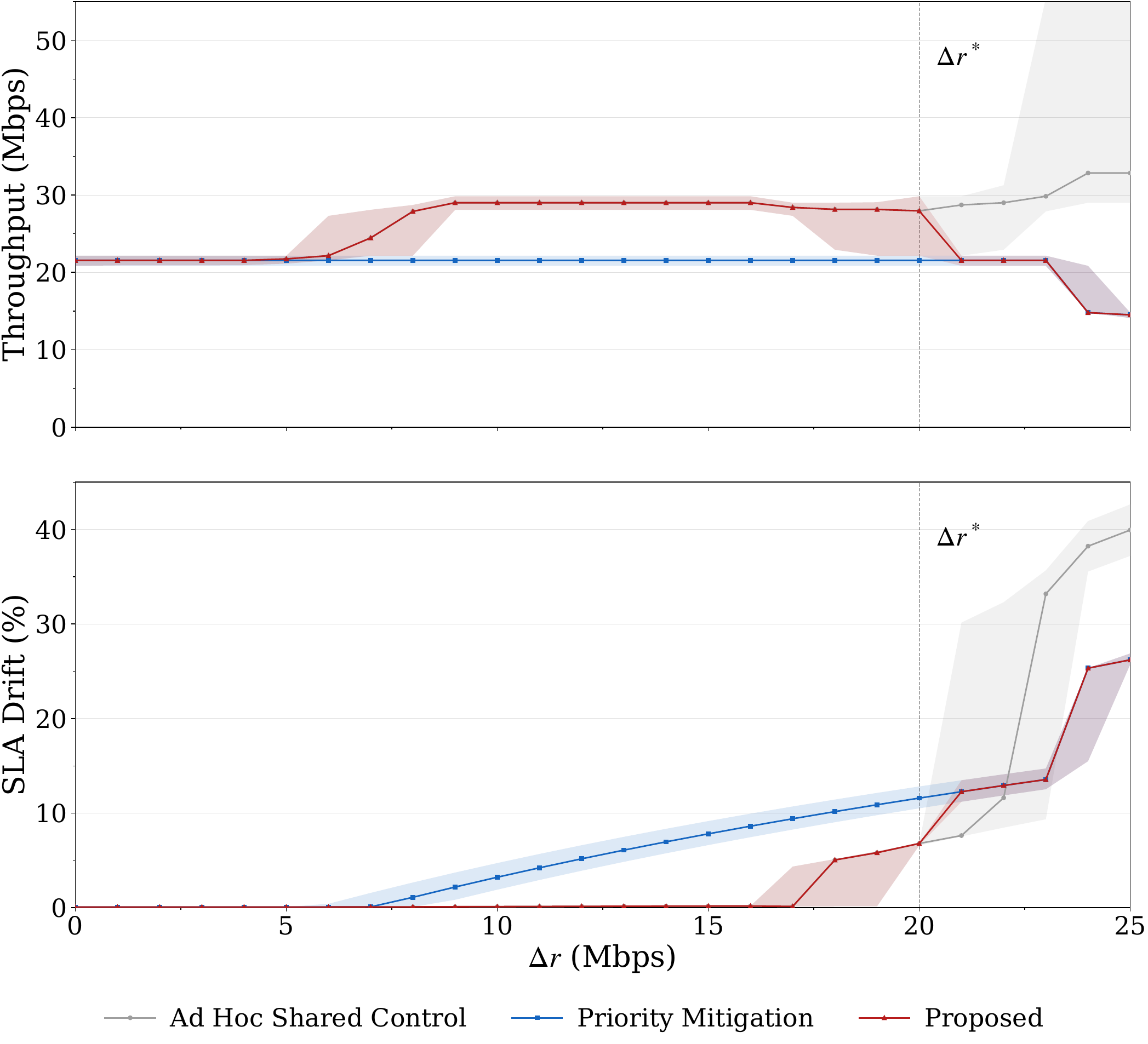}
    
    \caption{Impact of the coexistence between agents with different target separations, under our proposed policy and two baselines, in terms of combined \ac{SLA} drift and their monitored downlink throughput.  
    }
    \label{fig:sweep_deltar}
    \vspace{-.5cm}
\end{figure}

\subsection{Evaluating Compatibility Conditions}\label{subsec:design-eval}
In this analysis, we evaluate how differences between the agents' objectives impact 
their ability to coexist while sharing the control of a parameter.
Since their shared acceptable range is determined by their target values and tolerances, 
we set equal tolerances, $\delta_{\mathrm{TM}}=\delta_{\mathrm{ES}}=\unit[10]{Mbps}$, 
so that both agents have acceptable ranges of the same width, 
and vary the separation between their targets, $\Delta r$.
We consider 10 active \acp{UE} generating \unit[4.5]{Mbps} of traffic each,
and use \unit[28]{Mbps} (an achievable throughput under this traffic demand) as the calibration point for the agents' targets. 
We place the targets symmetrically around this reference value, with $r_{\mathrm{TM}}=28+\Delta r/2$ and $r_{\mathrm{ES}}=28-\Delta r/2$, yielding the shared acceptable throughput range $[\,18+\Delta r/2,\,38-\Delta r/2\,]$~Mbps. 
Varying $\Delta r$ allows us to reduce the alignment between the agents' objectives while keeping their targets centered around a feasible network state. 
By~\eqref{eq:shared_width}, its width $W$ decreases linearly with $\Delta r$ and collapses at $\Delta r^\ast=2\delta=\unit[20]{Mbps}$.

\Fig{sweep_deltar} shows the results of our simulations.
The Ad Hoc Shared Control maintains near-zero \ac{SLA} drift while the agents are compatible and the shared acceptable range is attainable with the evaluated \ac{PRB} allocations. As $\Delta r$ approaches $\Delta r^\ast$, the shared acceptable range narrows until the available \ac{PRB} allocations can no longer induce a throughput inside it, and the \ac{SLA} drift increases even though the agents remain compatible. In contrast, the Priority Mitigation suppresses one agent's actions at the cost of violating that agent's objective, even when the agents are compatible.

Our proposed policy displays three separate regimes according to $\Delta r$.
For $\Delta r\!\leq\!\unit[17]{Mbps}$, the \ac{SLA} drift remains below $0.2\%$, indicating that the agents' objectives are jointly satisfied. The throughput changes with $\Delta r$, but it remains within the shared acceptable range. 
For $\unit[18]{Mbps}\!\leq\!\Delta r\!\leq\!\unit[20]{Mbps}$, the agents remain compatible, but none of the discrete \ac{PRB} allocations induces a throughput within the narrowing shared acceptable range. This separates objective compatibility from practical attainability: the shared acceptable range remains non-empty, but no available \ac{PRB} configuration realizes it. Consequently, \ac{SLA} drift increases to ${\approx}5.8\%$ despite compatibility, with no substantial throughput degradation.
For $\Delta r\!>\!\Delta r^\ast$, the shared acceptable range is empty and agents are incompatible. 
Only then does our policy perform arbitration, prioritizing the ES. 
The combined \ac{SLA} drift grows with $\Delta r$, reaching ${\approx}26\%$ at $\Delta r\!=\!\unit[25]{Mbps}$, as TM's objective is no longer satisfied after its conflicting requests are suppressed.
These results show that our compatibility test can identify conditions where agents are compatible and can share control of parameters without unnecessary arbitration.

\begin{figure}[t]
    \centering
    \includegraphics[width=\linewidth]{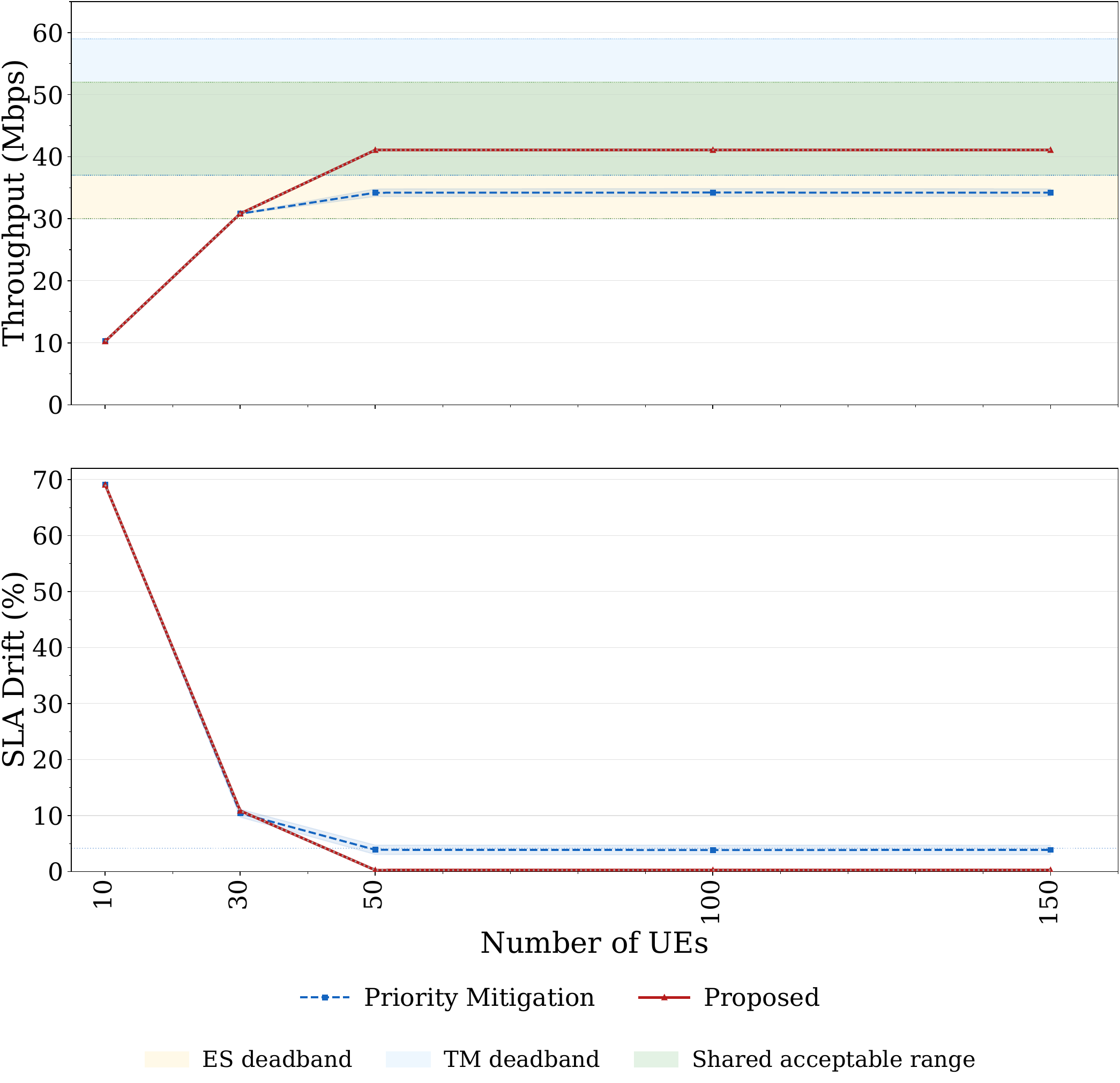}
    
    \caption{
    Effect of network conditions on the coexistence between agents,  as traffic demand increases with the number of active UEs, measured in terms of \ac{SLA} drift and throughput.}
    \label{fig:designs}
    \vspace{-1.5em}
\end{figure}

\subsection{From Compatibility Test to Practical Coexistence}\label{subsec:gain-eval}

In this analysis, we examine how network conditions affect the coexistence between agents. 
First, we consider an exemplar case where the TM and ES agents have equal tolerances of $\delta_{\mathrm{TM}}=\delta_{\mathrm{ES}}=\unit[11]{Mbps}$ and a target separation of \unit[7]{Mbps}, resulting in the shared acceptable throughput range $[37,52]$~Mbps.
To assess how the traffic demand affects the achievable throughput, we vary the number of active \acp{UE} from 10 to 150, generating \unit[1]{Mbps} of traffic each
as shown in \Fig{designs}. Here, the proposed policy permits shared control and is operationally equivalent to the Shared Control baseline, whose identical curve is omitted for readability. 
For up to 30 \acp{UE}, the traffic demand remains below the lower bound of the shared acceptable range.
Consequently, the \ac{RAN} cannot produce throughput within the shared acceptable range that satisfies both agents, despite their compatible objectives, and we can observe a high \ac{SLA} drift across all evaluated policies. 
Once the number of active \acp{UE} reaches 50, the cell can deliver a throughput within the shared acceptable range, and the mitigation policies begin to diverge. 
By permitting shared control, the proposed policy reduces the combined \ac{SLA} drift to approximately 0.2\%. 
In contrast, Priority Mitigation suppresses one agent and incurs a combined \ac{SLA} drift of ${\approx}4\%$, a $20\times$ increase. 
These results show that, while our compatibility test can identify whether the agents' objectives can be satisfied simultaneously, the network conditions and their impact on the monitored \acp{KPI} influence whether the agents can coexist in practice.

\begin{figure}[t]
    \centering
    \includegraphics[width=\linewidth]{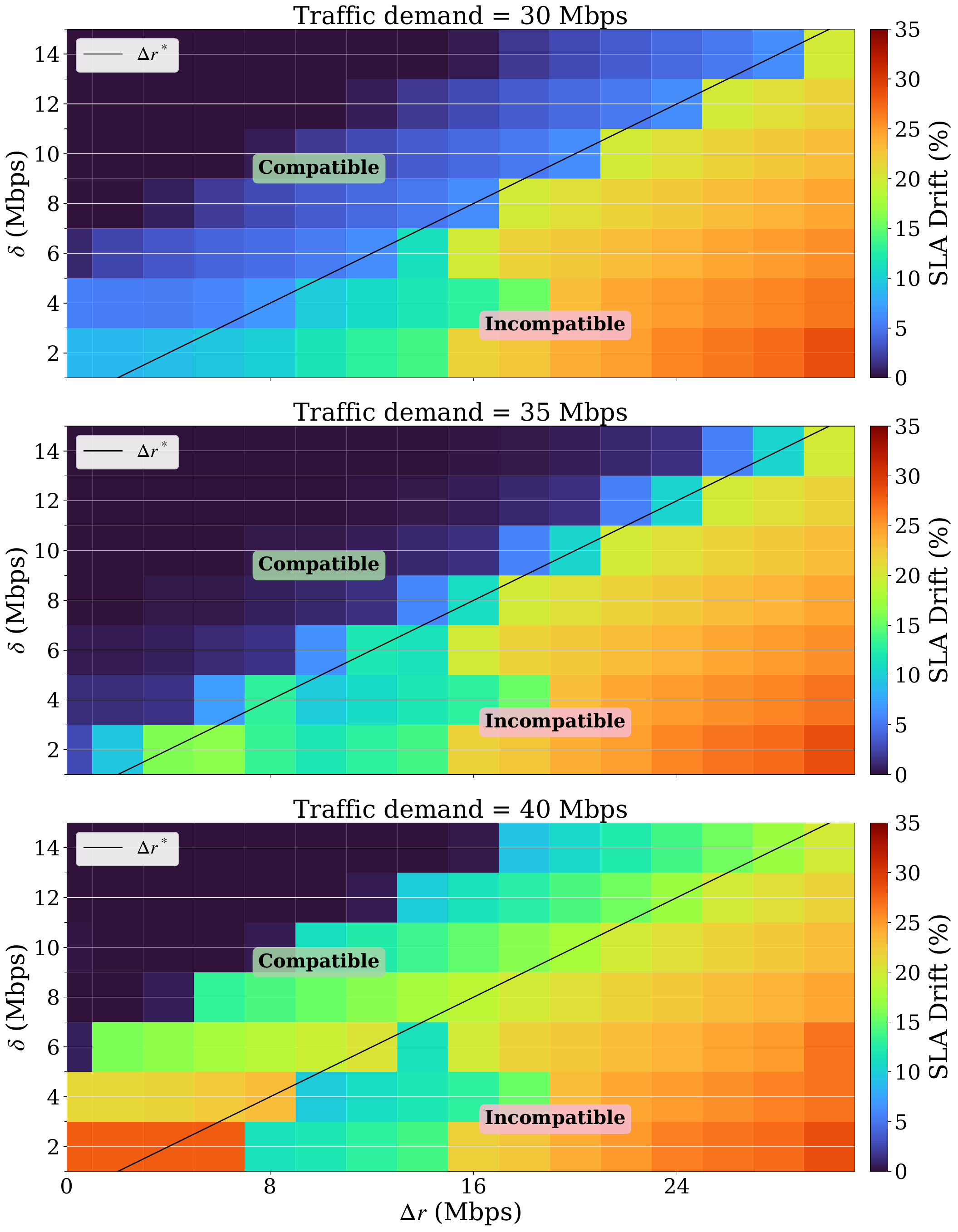}
    \vspace{-.75em}
    \caption{\ac{SLA} drift for different target separations and tolerances, for traffic demands of 30, 35, and \unit[40]{Mbps}. The line $\Delta r^\ast\!=\!2\delta$ delimits the compatible and incompatible regions.}
    \label{fig:friction_heatmap}
    \vspace{-1.5em}
\end{figure}

Finally, we extend the analysis across a wide range of target separations and tolerances under three traffic demands, and measure their effects on \ac{SLA} drift, as shown in \Fig{friction_heatmap}. The agents are designed around the \unit[35]{Mbps} operating point used in our reference configuration and the \unit[30]{Mbps} and \unit[40]{Mbps} cases evaluate how their behavior changes as the network operates below or above this reference.
The black line marks the theoretical compatibility boundary $\Delta r=2\delta$ from~\eqref{eq:shared_width}. 
When $\Delta r>2\delta$, the agents are incompatible because their acceptable ranges do not overlap. 
However, even when $\Delta r\leq2\delta$, we can observe some \ac{SLA} drift as the traffic demand moves away from the reference operating point.
This occurs because network conditions impose additional constraints on the network state, 
potentially making the shared acceptable range unreachable despite the agents being compatible. 
This shows that compatibility determines whether a shared acceptable range exists, while available control actions and network conditions determine whether it is attainable in practice.

%% file: sections/07_conclusion.tex
\section{Conclusion}\label{sec:conc}
In this paper, to the best of our knowledge, we provided the first control-theoretic formulation of objective compatibility among AI-RAN agents. Through this model, we derived compatibility conditions for determining whether agents can share control of the same parameters while allowing their objectives to be satisfied simultaneously, and proved that, under the componentwise deadband model, compatibility for any number of agents and monitored \acp{KPI} can be established through pairwise tests. We evaluated our model through simulations using NIST's \texttt{ns3-oran} module, and our results demonstrated that compatibility-aware mitigation can substantially reduce \ac{SLA} drift relative to priority mitigation when compatible objectives are attainable. Future work will relax the assumptions of the current formulation by incorporating an explicit RAN plant model to assess attainability alongside compatibility and to analyze the closed-loop behavior of incompatible agents under the deadband nonlinearity. We will also consider heterogeneous sampling periods, coupled multi-\ac{KPI} objectives, multiple shared control parameters, and indirect and implicit conflicts. Together, these directions would extend objective compatibility beyond the direct-conflict setting considered here toward more general coordination of heterogeneous agents in AI-RAN.